\documentclass{lmcs}

\usepackage{amsmath,amssymb,mathtools}
\usepackage{booktabs}
\usepackage{hyperref}

\newcommand{\LAc}{\mathrm{LA}_{\!c}}
\newcommand{\GLUE}{\mathsf{GLUE}}
\newcommand{\SELECT}{\mathsf{SELECT}}
\newcommand{\SEARCHSEL}{\mathsf{SEARCH}_{\mathsf{SEL}}}
\newcommand{\FPS}{\mathrm{FP}_{\mathrm{search}}}
\newcommand{\FNPS}{\mathrm{FNP}_{\mathrm{search}}}
\newcommand{\RGLUE}{R_{\mathsf{GLUE}}}
\newcommand{\RSEL}{R_{\mathsf{SEL}}}

\newcommand{\eff}{\mathrm{eff}}
\newcommand{\Res}{\mathrm{Res}}

\title[Computational Free Will and P versus NP]{Computational Free Will as Global Selection:\
From Sheaf-Theoretic Gluing to a Conditional Separation of P and NP}

\author[J. Clech]{J\'er\^ome Clech}[a,b]
\address{Colonel, PhD, Habilitation to Supervise Research (HDR); Chairholder of the Chair of Applied Air and Space Strategies, Centre for Aerospace Strategic Studies (CESA), French Air and Space Force, Paris, France}
\address{Associate Researcher, Technology and Global Affairs Innovation Hub, Paris School of International Affairs (PSIA), Sciences Po, Paris, France}
\email{jerome.clech@sciencespo.fr}

\begin{document}

\begin{abstract}
We formalise computational free will by separating locally constrained admissibility from the selection of one global continuation.  Global sections of a finite choice presheaf form an admissible set, GLUE; SELECT singles out the continuation realised at a pre-identified occurrence.  A uniform trace relation certifies that continuation efficiently after the act, although it is assumed not to be uniformly anticipable in polynomial time from the prior occurrence input.  Under explicit uniformity, balance, historical-completeness, and unique-projection assumptions, this trace defines a total FNP search relation with no deterministic polynomial-time selector.  Thus existence of computational free will in the stated sense implies a separation between polynomially verifiable and polynomially solvable search, and hence that P differs from NP.  The result is conditional and gives no unconditional class separation.
\end{abstract}

\maketitle

\noindent\textbf{Keywords:} sheaf theory, constraint satisfaction, search complexity, FP, FNP, proof complexity, Tseitin formulas, global sections, selection.

\section{Introduction}\label{sec:introduction}

Strategic foresight does more than extrapolate trends or arrange possible futures.  It first determines how a situation is made intelligible: it chooses categories, delimits a problem, and changes the representational system in which solutions can be conceived.  A change in representation therefore changes the space of admissible continuations.  This observation motivates the present mathematical question.  When a prior informational state constrains several coherent global continuations, what distinguishes the existence of those continuations from the act that makes one of them actual?

The proposed answer separates two operations.  The first, \(\GLUE\), constructs or characterises the global continuations compatible with all local constraints.  The second, \(\SELECT\), singles out the continuation that is actually realised.  This distinction is elementary but consequential: a full description of the admissible solution set does not, by itself, determine which member will become the historical outcome.  We use sheaf-theoretic language to express the local-to-global structure, a finite constraint-satisfaction model to make it effective, and a trace relation to formalise the asymmetry between anticipation before the act and verification after it.

The resulting notion of \emph{computational free will}, denoted by \(\LAc\), is not intended as a definition of metaphysical free will.  It is a property of an encoded family of choice processes.  The prior state determines a structured set of at least two admissible global continuations; an occurrence identifier is fixed without encoding its outcome; one continuation is realised; the realised value is certifiable a posteriori by a uniform polynomial-time relation; and no uniform polynomial-time procedure predicts it from the prior occurrence input, even after admissible polynomial-time changes of representation.

The main result is deliberately conditional.  Under the formal assumptions made precise below, a family satisfying \(\LAc\) induces a polynomially balanced and polynomial-time decidable search relation.  Unique projection of accepted certificates onto the realised continuation ensures that any deterministic polynomial-time selector for that relation computes \(\SELECT\).  The non-anticipability condition therefore excludes such a selector.  With the search-class notation defined in \autoref{sec:main-result}, we obtain
\[
  \LAc \quad\Longrightarrow\quad \FPS\neq\FNPS
  \quad\Longrightarrow\quad \mathrm{P}\neq\mathrm{NP}.
\]
The first implication is the contribution formalised here; the second is the standard search-to-decision consequence under the usual conventions for polynomially balanced NP search relations \cite{Papadimitriou1994,AroraBarak2009}.

Two safeguards are central.  First, the sheaf-theoretic motivation is not presented as a direct proof that \(\mathrm{P}\neq\mathrm{NP}\).  Equality of the decision classes would not automatically induce a Morita equivalence between arbitrary classifying topoi, nor is a gluing obstruction identical to a polynomial-time lower bound.  Second, the Tseitin construction is used only as a restricted witness showing that local-to-global obstruction can yield exponential resolution size on a suitable bounded-degree expander family \cite{Tseitin1968,Urquhart1987,BenSassonWigderson2001}.  The general conditional theorem does not infer universal non-anticipability from that restricted lower bound.

The paper is organised as follows.  \autoref{sec:sheaf-chapeau} gives the sheaf-theoretic conceptual framework.  \autoref{sec:lac} defines computational free will independently of the complexity-class conclusion.  \autoref{sec:finite-model} identifies global sections with solutions of a finite CSP.  \autoref{sec:glue-select} separates gluing from selection, and \autoref{sec:trace} introduces retrospective certification.  \autoref{sec:tseitin} gives the restricted Tseitin witness.  \autoref{sec:representation} treats effective changes of representation.  \autoref{sec:main-result} proves the conditional separation between polynomially verifiable and polynomially solvable search and records exactly what is, and is not, established.

\section{Sheaf-theoretic conceptual framework}\label{sec:sheaf-chapeau}

This section records the conceptual argument that preceded the finite CSP and complexity-theoretic formulation.  Its purpose is to identify the mathematical structure of the problem, not to supply the final complexity proof.

\subsection{From local determinations to a global decision}

A decision is taken within prior determinations: the state and history of an agent, available reasons and information, environmental constraints, and interactions.  Each constrains the possible continuations.  The question is not whether the choice has causes, but whether the totality of prior determinations already singles out one global continuation before the act.

Let \(\mathcal U=\{U_i\}_{i\in I}\) be a cover of contexts and let \(F(U_i)\) be a set of admissible local sections.  A family \((s_i)_{i\in I}\), with \(s_i\in F(U_i)\), is compatible when
\begin{equation}\label{eq:compatibility}
  s_i\vert_{U_i\cap U_j}=s_j\vert_{U_i\cap U_j}
  \qquad\text{for all }i,j\in I.
\end{equation}
The choice problem becomes a gluing problem: do compatible local determinations admit a global section, and, if so, is that section unique?

Three regimes must be distinguished.
\begin{enumerate}
  \item \emph{Unique globalisation.}  The local data impose one global section; the continuation is already singled out by the prior constraints.
  \item \emph{Obstruction to gluing.}  No global section exists.  When an appropriate cohomology theory is available, a non-zero obstruction class may witness this failure.
  \item \emph{Plurality of globalisations.}  At least two global sections remain compatible with the same prior determinations.
\end{enumerate}
The third regime is the relevant one for computational choice.  It differs from both causal absence and inconsistency: the prior state determines a structured possibility space without yet determining its historical actualisation.  Sheaf-theoretic analyses of contextuality provide a related use of global-section obstructions, although the interpretation and target here are different \cite{AbramskyBrandenburger2011,AbramskyMansfieldBarbosa2012}.

For a prior state \(x\), let \(F_x\) encode the corresponding determinations and compatibilities.  The admissible global continuations are
\begin{equation}\label{eq:multiple-global-sections}
  \Gamma(F_x)=\{s^{(1)},s^{(2)},\ldots,s^{(k)}\},\qquad k\geq 2.
\end{equation}
The act introduces a further operation
\begin{equation}\label{eq:conceptual-selection}
  \Gamma(F_x)\longrightarrow s^*\in\Gamma(F_x).
\end{equation}
The map in \eqref{eq:conceptual-selection} is not the gluing operation itself.  The sheaf organises admissibility; the act singles out one admissible global continuation.

\subsection{What the conceptual argument does and does not prove}

Topos theory is useful here because it treats local objects, restriction, compatibility, gluing, and changes of presentation in one framework \cite{MacLaneMoerdijk1992}.  It also forces a distinction between an intrinsic obstruction and an artefact of representation.  However, the conceptual argument stops at the following statement.

\begin{prop}[Conceptual local-to-global structure]\label{prop:conceptual}
If prior determinations are represented by local sections whose compatible globalisations form a set of cardinality at least two, then the prior sheaf structure does not by itself single out the realised continuation.  An additional selection operation is required to pass from global admissibility to actualisation.
\end{prop}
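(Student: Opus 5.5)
The plan is to make the informal phrase ``the prior sheaf structure singles out a continuation'' precise, and then argue by a counting and invariance observation.

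\emph{Formalising the claim.}  Fix a prior state \(x\) with presheaf \(F_x\), and suppose \(|\Gamma(F_x)|\geq 2\) as in \eqref{eq:multiple-global-sections}.  A continuation \(s^*\) is \emph{singled out by the prior sheaf structure} if it is determined by \(F_x\) alone.  More precisely, there should be a rule \(\sigma\) that assigns to each such presheaf a global section, and that is invariant under isomorphisms of presheaves over the cover \(\mathcal U\).  This is what ``determined by the local data together with their restrictions and compatibilities, and nothing else'' means.  The gluing operation itself is the functor \(F\mapsto\Gamma(F)\) (the limit over the cover).  Its output is the whole set, not an element of it.  The first step is therefore to observe that \(\Gamma(F_x)\) is an object, namely a set of cardinality \(k\geq 2\).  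Passing to a distinguished element requires a further map \(\Gamma(F_x)\to\Gamma(F_x)\) or \(1\to\Gamma(F_x)\).  That map is the selection \eqref{eq:conceptual-selection}, and it is not part of the data of the limit.

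\emph{Invariance argument.}  Next, I would argue that no canonical choice is forced.  The compatibility conditions \eqref{eq:compatibility} are satisfied by every \(s^{(j)}\) equally, since each is by definition a compatible family.  So the admissibility predicate cannot distinguish among them.  To make this sharp, I would exhibit a witness.  Take the restricted sub-presheaf \(G\subseteq F_x\) generated by two distinct sections \(s^{(1)},s^{(2)}\).  Where the two sections disagree, an automorphism of \(G\) swaps their local components.  Where they agree, the automorphism acts as the identity.  Agreement on overlaps ensures the swap commutes with restriction.  Under this automorphism \(s^{(1)}\mapsto s^{(2)}\), so any isomorphism-invariant selection rule would have to be fixed by it, which is impossible.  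The conclusion is that any rule producing \(s^*\) uses information beyond the sheaf structure, such as an ordering, an external state, or the act itself.  This is exactly the additional selection operation.

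\emph{Main obstacle.}  The hard step is the invariance claim.  For an arbitrary presheaf \(F_x\), the two sections need not be related by an automorphism of \(F_x\) itself, since \(F_x\) may carry asymmetric extra local sections.  So the statement is only literally true in the weaker sense.  In that sense, \(\Gamma\) returns a set rather than an element, and the admissibility predicate is constant on \(\Gamma(F_x)\).

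I would therefore state the proposition's content as follows: gluing determines \(\Gamma(F_x)\), and when \(|\Gamma(F_x)|\geq 2\) no element is distinguished by membership in \(\Gamma(F_x)\).  The automorphism argument is offered only as a supporting remark for the symmetric case.  Any finer singling-out through asymmetries of \(F_x\) is itself a selection rule external to gluing, which is what the proposition asserts.
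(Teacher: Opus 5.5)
Your argument, in the weak form you finally adopt, is correct and is essentially the paper's own justification. The paper gives no separate proof of this conceptual proposition; it rests on two observations: gluing returns the whole set $\Gamma(F_x)$ while the act is a further map \eqref{eq:conceptual-selection}, and (Remark~\ref{lem:glue-select}) membership in a set with at least two elements cannot distinguish the realised member.

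You are right not to rely on the automorphism argument, because for a single $F_x$ the isomorphism-invariant reading is false in general, not merely hard. For example, take one context $\{a,b\}$ with $R=\{(0,0),(0,1),(1,0)\}$: every automorphism fixes $(0,0)$, which is the unique global section whose two one-variable restrictions each extend to two global sections. So the proposition can only mean, as you conclude, that identifying the \emph{realised} section is an operation beyond gluing.
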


The proposition says nothing yet about polynomial time.  In particular, neither a categorical equivalence nor the existence of an obstruction automatically gives a lower bound in a computational model.  The remainder of the paper therefore constructs a finite model, separates \(\GLUE\) from \(\SELECT\), specifies certification and uniformity, and only then invokes standard complexity classes.

\section{Computational free will}\label{sec:lac}

\begin{defi}[Computational free will]\label{def:lac}
An encoded family of choice processes satisfies \emph{computational free will}, written \(\LAc\), if an instance is a pair \(q=(x,e)\), where \(x\) is the prior informational state and \(e\) is an occurrence identifier fixed before the act, and the following six conditions hold.
\begin{description}
  \item[(D)] \emph{Prior determinations.} The component \(x\) encodes a non-trivial prior informational state and its constraints; \(e\) identifies the occurrence without encoding its outcome.
  \item[(P)] \emph{Admissible plurality.} The prior state determines a set \(\GLUE(x)\) of globally coherent continuations with \(\lvert\GLUE(x)\rvert\geq2\).
  \item[(G)] \emph{Global coherence.} Members of \(\GLUE(x)\) satisfy all local compatibility and global admissibility conditions; they are not arbitrary alternatives.
  \item[(S)] \emph{Singularisation.} The occurrence \(e\) realises one well-defined value \(s_e=\SELECT(q)\in\GLUE(x)\).
  \item[(T)] \emph{Traceability.} After actualisation, a finite trace \(\tau_e\) certifies, through a fixed uniform predicate, that \(s_e\) was the continuation realised at occurrence \(e\).
  \item[(A)] \emph{Non-anticipability.} No uniform polynomial-time procedure computes \(\SELECT(q)\) from the prior input \(q=(x,e)\) on the entire encoded family, including after any admissible polynomial-time change of representation.
\end{description}
\end{defi}

The identifier \(e\) is publicly registered before actualisation, has polynomially bounded length, and is generated independently of the realised continuation.  It may be a session or experiment identifier, but it is neither a hidden seed nor post-event information.  Consequently, two occurrences with the same prior state may satisfy
\[
  \SELECT(x,e_1)\neq\SELECT(x,e_2)
\]
without making \(\SELECT\) multivalued.

Thus \(\LAc=(D,P,G,S,T,A)\).  The definition excludes several weaker phenomena.  Plurality without singularisation describes an unresolved possibility space, not a choice.  Singularisation without traceability gives no mathematically usable retrospective certificate.  Randomness alone does not satisfy the definition merely by producing unpredictable outputs: it must also obey the structured admissibility and trace conditions.  Conversely, a deterministic preference function that is uniformly polynomial-time computable from \(q\) fails condition (A).

The definition also distinguishes ontological indeterminacy from computational non-anticipability.  The former concerns what is determined in reality; the latter concerns what a uniform algorithm can compute from an encoded prior state.  No inference from one notion to the other is assumed here.

Condition (A) is an operational clause in the definition of computational free will: it expresses the absence of a uniform polynomial-time anticipation procedure for the realised continuation.  It asserts neither a separation of search classes nor \(\mathrm{P}\neq\mathrm{NP}\), and it does not by itself place any search relation in \(\FNPS\).  The separation obtained below is conditional and arises only after condition (A) is combined with the independent traceability requirements---polynomial balance, polynomial-time verification, totality, and unique projection---that place the associated selection relation in \(\FNPS\).  The significance of the theorem therefore depends on whether \autoref{def:lac} captures a substantive and independently defensible property of choice processes.

\begin{prop}[Prior non-singularisation]\label{prop:prior-nonsingularisation}
For any process satisfying \(\LAc\), the prior state determines a non-singleton structured set \(\GLUE(x)\), while the realised value \(\SELECT(q)\) is not supplied by that set description or by the outcome-independent identifier \(e\).  After actualisation, a trace makes the realised value verifiable under condition (T).
\end{prop}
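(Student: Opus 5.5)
The proposition is essentially an unpacking of \autoref{def:lac}, so the plan is to read off each of its three assertions from the corresponding clauses (D), (P), (G), (S), (T), (A), invoking \autoref{prop:conceptual} for the structural part.

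\emph{Step 1: the prior state determines a non-singleton structured set.} Condition (P) states directly that \(x\) determines \(\GLUE(x)\) with \(\lvert\GLUE(x)\rvert\geq 2\). Condition (G) states that its members satisfy all local compatibility and global admissibility conditions. In the language of \autoref{sec:sheaf-chapeau}, \(\GLUE(x)\) therefore plays the role of \(\Gamma(F_x)\) in \eqref{eq:multiple-global-sections}, and we are in the third regime, plurality of globalisations.

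\emph{Step 2: the realised value is not supplied by the set description or by \(e\).} I would argue this in two parts.
\begin{enumerate}
  \item \emph{The set description alone.} Since \(\lvert\GLUE(x)\rvert\geq2\), \autoref{prop:conceptual} says the sheaf structure does not single out the realised continuation. Moreover, the convention after \autoref{def:lac} allows \(\SELECT(x,e_1)\neq\SELECT(x,e_2)\) for the same \(x\). Hence \(\SELECT(q)\) is not a function of \(\GLUE(x)\), or of \(x\), alone.
  \item \emph{The identifier \(e\).} By (D) and the registration convention, \(e\) is generated independently of the realised continuation and does not encode its outcome. So \(e\) carries no outcome information by definition.
\end{enumerate}
What remains is the possibility that the pair \(q=(x,e)\) determines \(s_e\) in some computable way. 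Here ``not supplied'' must be read computationally, and (A) gives exactly this: no uniform polynomial-time procedure, even after an admissible change of representation, computes \(\SELECT(q)\) from \(q\).

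\emph{Step 3: after-the-fact verifiability.} By (S) a unique value \(s_e=\SELECT(q)\in\GLUE(x)\) exists. By (T) a finite trace \(\tau_e\) certifies it through the fixed uniform predicate. This is precisely the final sentence of the proposition.

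The main obstacle is Step 2. ``Not supplied'' is informal, and the claim is only as strong as its interpretation. I would fix it explicitly as the conjunction of three points: informational non-determination by \(\GLUE(x)\), via \autoref{prop:conceptual}; outcome-independence of \(e\), via (D); and computational non-anticipability from \(q\), via (A). I would also flag that no ontological claim is made, in line with the remark distinguishing indeterminacy from non-anticipability.
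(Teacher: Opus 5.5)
Your decomposition is the intended one. The paper gives no separate argument for this proposition and treats it as a direct reading of Definition~\ref{def:lac}:
(P) and (G) give the structured non-singleton set; Proposition~\ref{prop:conceptual} and Remark~\ref{lem:glue-select} give the point that such a set does not by itself single out a member; (D) gives the outcome-independence of \(e\); and (S) and (T) give retrospective certification. Steps~1 and~3, and your use of (A) for the pair \(q=(x,e)\), are fine.

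One inference in Step~2 fails. The text after Definition~\ref{def:lac} says only that two occurrences with the same prior state \emph{may} satisfy \(\SELECT(x,e_1)\neq\SELECT(x,e_2)\). None of (D)--(A) requires \(e\) to influence the outcome. So ``hence \(\SELECT(q)\) is not a function of \(\GLUE(x)\), or of \(x\), alone'' does not follow, and it cannot be derived from \(\LAc\) at all. Take \(\SELECT(x,e)=f(x)\) for any computable \(f\) with no uniform polynomial-time algorithm; such \(f\) exist by the time hierarchy theorem. This choice is compatible with every clause as stated, and the paper itself excludes only deterministic preference rules that are polynomial-time computable. Nothing even prevents \(f(x)\) from being a canonical but hard-to-compute member of \(\GLUE(x)\), in which case \(\SELECT(q)\) is literally determined by the set. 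Clause (T) in Definition~\ref{def:lac} asks only for a fixed uniform predicate, and \(V(q,s,\tau)=[s=f(x)]\) qualifies.

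Your final reading of ``not supplied'' includes ``informational non-determination by \(\GLUE(x)\)''. That conjunct is therefore unprovable, and Proposition~\ref{prop:conceptual} does not give it: that proposition only says the sheaf structure does not \emph{by itself} single out the continuation. Replace it with the structural reading of Remark~\ref{lem:glue-select}. Membership in a set with at least two admissible elements cannot distinguish the realised one, so some further selection operation is required. Condition (A) then guarantees that this operation is not a uniform polynomial-time function of \(q\). With that change, your three steps establish the proposition.
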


\section{A finite sheaf model and its associated CSP}\label{sec:finite-model}

Let \(X\) be a finite set of elementary variables.  For each \(a\in X\), let \(D_a\) be a finite domain, and let
\[
  \mathcal U=\{U_1,\ldots,U_m\},\qquad U_i\subseteq X,
  \qquad\bigcup_{i=1}^{m}U_i=X,
\]
be a finite cover.  For \(U\subseteq X\), define the assignment presheaf
\[
  E(U)=\prod_{a\in U}D_a,
\]
with restriction maps \(\rho^U_V:E(U)\to E(V)\) given by coordinate restriction whenever \(V\subseteq U\).

More formally, let \(\mathcal C_X\) be the poset category of subsets of \(X\), with an arrow \(V\to U\) when \(V\subseteq U\), equipped with the finite-cover topology in which \(\{U_j\to U\}_j\) covers \(U\) when \(\bigcup_jU_j=U\).  Coordinate restriction makes \(E:\mathcal C_X^{\mathrm{op}}\to\mathbf{Set}\) a sheaf: compatible partial assignments glue uniquely to an assignment on their union.

A finite choice model specifies, for every context \(U_i\), a constraint relation
\[
  R_i\subseteq E(U_i)=\prod_{a\in U_i}D_a.
\]
For \(W\subseteq U_i\), write
\[
  R_i|_W:=\{r|_W:r\in R_i\}\subseteq E(W).
\]
These relations induce a constrained subpresheaf \(F\subseteq E\) by
\begin{equation}\label{eq:constraint-presheaf}
  F(U):=\bigl\{t\in E(U):
  t|_{U\cap U_i}\in R_i|_{U\cap U_i}
  \text{ for every }i\bigr\}.
\end{equation}
If \(t\in F(U)\) and \(V\subseteq U\), then \(t|_V\in F(V)\), so the coordinate restrictions of \(E\) restrict to \(F\).  The constrained presheaf \(F\) need not itself satisfy the sheaf gluing axiom; when it does, it may properly be called a sheaf of choices.  None of the finite-complexity arguments below assumes that additional property.

A family \((s_i)_{i=1}^{m}\), with \(s_i\in R_i\), is compatible when
\[
  s_i\vert_{U_i\cap U_j}=s_j\vert_{U_i\cap U_j}
  \qquad\text{for all }i,j.
\]
Because \(E\) is a sheaf, every such family has a unique underlying global assignment \(s\in E(X)\).  The global constrained assignments are
\begin{equation}\label{eq:global-sections}
  \Gamma(F):=F(X)
  =\bigl\{s\in E(X):s\vert_{U_i}\in R_i
  \text{ for every }i\bigr\}.
\end{equation}

Associate with these data the constraint-satisfaction instance \(\operatorname{CSP}(F)\) having variables \(X\), domains \(D_a\), and constraint relations \(R_i\) on the contexts \(U_i\).

\begin{prop}[Global sections and CSP solutions]\label{prop:glue-csp}
There is a canonical bijection
\[
  \Gamma(F)\cong\operatorname{Sol}(\operatorname{CSP}(F)).
\]
\end{prop}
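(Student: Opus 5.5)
The bijection will be the identity on underlying assignments. By \eqref{eq:global-sections}, \(\Gamma(F)=F(X)\) is a subset of \(E(X)=\prod_{a\in X}D_a\). By definition of \(\operatorname{CSP}(F)\), a solution is a total assignment \(s\in\prod_{a\in X}D_a\) with \(s|_{U_i}\in R_i\) for every \(i\). So it suffices to show that the two subsets of \(E(X)\) coincide. The canonical map \(\Gamma(F)\to\operatorname{Sol}(\operatorname{CSP}(F))\) is then the inclusion, and its inverse is the identity.

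The one point needing care is that \(F(X)\) is defined by \eqref{eq:constraint-presheaf}, not directly by \eqref{eq:global-sections}. Taking \(U=X\) in \eqref{eq:constraint-presheaf} gives
\[
  F(X)=\{t\in E(X): t|_{U_i}\in R_i|_{U_i}\text{ for every }i\},
\]
because \(X\cap U_i=U_i\). Since \(R_i|_{U_i}=\{r|_{U_i}:r\in R_i\}=R_i\), the two conditions agree, and the description in \eqref{eq:global-sections} follows. This identification is the only potential obstacle, and it is immediate. It does use that each \(R_i\) is a relation on exactly \(U_i\), so restriction to the full context is the identity.

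Both inclusions then follow by unfolding definitions. A \(t\in F(X)\) satisfies every constraint \(R_i\) on its scope \(U_i\), so it is a CSP solution. Conversely, a CSP solution lies in \(E(X)\) and satisfies each constraint, so it lies in \(F(X)\).

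To justify calling the bijection canonical and to connect it with the compatible families of local sections, I will also record a second description. Compatible families \((s_i)\) with \(s_i\in R_i\) correspond bijectively to \(\Gamma(F)\). Given such a family, the sheaf property of \(E\) for the cover \(\{U_i\to X\}\) yields a unique gluing \(s\in E(X)\) with \(s|_{U_i}=s_i\in R_i\), hence \(s\in\Gamma(F)\). Conversely, \(s\mapsto(s|_{U_i})_i\) sends \(\Gamma(F)\) to compatible families. Uniqueness of gluing shows these maps are mutually inverse. This shows the bijection does not depend on any choice and is compatible with the gluing picture of \autoref{sec:sheaf-chapeau}.
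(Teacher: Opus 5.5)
Your proof is correct and is essentially the paper's: the bijection is the identity on underlying assignments in \(E(X)\), and both inclusions follow by unfolding the definitions. Your extra check that setting \(U=X\) in the definition of \(F\) recovers the stated description of \(\Gamma(F)\) (since \(X\cap U_i=U_i\) and \(R_i|_{U_i}=R_i\)), and your correspondence with compatible families via gluing in \(E\), are valid refinements of points the paper only asserts in the surrounding text.
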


\begin{proof}
If \(s\in\Gamma(F)=F(X)\), then \(s\vert_{U_i}\in R_i\) for every \(i\), hence \(s\) satisfies every CSP constraint.  Conversely, any solution of \(\operatorname{CSP}(F)\) restricts to an element of \(R_i\) on every context and therefore belongs to \(F(X)=\Gamma(F)\).  Both descriptions preserve the underlying global assignment.
\end{proof}

\begin{defi}[Gluing set]\label{def:glue}
For a finite choice model \(F\), define
\[
  \GLUE(F):=\Gamma(F)\cong\operatorname{Sol}(\operatorname{CSP}(F)).
\]
For a prior state \(x\), we write \(\GLUE(x):=\Gamma(F_x)\).  For an occurrence input \(q=(x,e)\), set \(\GLUE(q):=\GLUE(x)\); the occurrence identifier does not alter prior admissibility.
\end{defi}

The cardinality of \(\GLUE(F)\) distinguishes obstruction, prior singularisation, and plurality:
\[
\begin{array}{rcl}
\GLUE(F)=\varnothing &:& \text{no admissible global continuation},\\
\lvert\GLUE(F)\rvert=1 &:& \text{the constraints already single out the continuation},\\
\lvert\GLUE(F)\rvert\geq2 &:& \text{several globally admissible continuations remain}.
\end{array}
\]
This correspondence makes the local-to-global problem finite and computationally explicit, but it does not yet assign any complexity bound to gluing and does not choose a member of a non-singleton solution set.  The latter operation is isolated next.

\section{From gluing to selection}\label{sec:glue-select}

For an occurrence input \(q=(x,e)\), \(\GLUE(q)=\GLUE(x)\) answers the admissibility question: which continuations satisfy all constraints?  It does not answer the historical question: which continuation is realised at \(e\)?  We therefore introduce a separate operation.

\begin{defi}[Selection operator]\label{def:select}
On a domain of occurrence inputs \(q=(x,e)\) satisfying \(\lvert\GLUE(x)\rvert\geq2\), a selection operator is a map
\begin{equation}\label{eq:select}
  \SELECT(q)=\SELECT(x,e)=s_e\in\GLUE(x).
\end{equation}
The value \(s_e\) is the continuation actually realised at occurrence \(e\).
\end{defi}

Before the act, the mathematical object made available by the constraints is the set
\[
  \GLUE(x)=\{s^{(1)},s^{(2)},\ldots,s^{(k)}\},\qquad k\geq2.
\]
The act is represented schematically by
\[
  (x,e,\GLUE(x))\xrightarrow{\;a_e\;}(x,e,s_e),
  \qquad s_e\in\GLUE(x).
\]
The arrow is not asserted to be outside mathematics; rather, it marks the point at which the realised value becomes part of the encoded history.  A computational account must then specify what information is available before and after this transition.

Selection is not identified with arbitrary randomness.  A random sampler may choose a member of \(\GLUE(x)\), but unpredictability alone does not establish \(\LAc\).  Nor may \(\SELECT\) be replaced by a polynomial-time preference rule fixed by the prior state, since that would violate effective non-anticipability.  In every case, coherence requires
\begin{equation}\label{eq:select-coherence}
  \SELECT(q)\in\GLUE(x).
\end{equation}

The ordinary gluing relation is
\begin{equation}\label{eq:r-glue}
  \RGLUE(q,s)=1 \quad\Longleftrightarrow\quad s\in\GLUE(x).
\end{equation}
Solving the search problem associated with \(\RGLUE\) produces an arbitrary admissible continuation.  It need not produce the realised value \(\SELECT(q)\).  This distinction is essential for the later unique-projection condition.

\begin{rem}[Structural distinction between gluing and selection]\label{lem:glue-select}
If \(\lvert\GLUE(x)\rvert\geq2\), then knowledge of the set \(\GLUE(x)\) alone does not identify \(\SELECT(q)\).  Additional information or an additional operation is required to single out the realised member.
Indeed, membership in a set containing at least two admissible values cannot distinguish the historically realised member from every other member.  This is an elementary structural observation, not a complexity lower bound.
\end{rem}

The minimal architecture is thus
\[
  q=(x,e)\longmapsto F_x\longmapsto\GLUE(x)
  \xrightarrow{\SELECT}s_e.
\]
To connect this architecture with FNP, the realised value must be certified by a fixed relation rather than merely observed informally.

\section{Traceability and retrospective verification}\label{sec:trace}

Admissibility is insufficient to certify actualisation: \(s\in\GLUE(x)\) does not imply \(s=\SELECT(q)\).  We therefore distinguish the witness generated or made available after the act from the uniform predicate that verifies it.

\begin{defi}[Trace and selection certificate]\label{def:trace}
Let \(\operatorname{Actual}(e)=s\) mean that \(s\) is the continuation historically realised at occurrence \(e\).  A trace system is a fixed uniform predicate \(V\) satisfying the historical-adequacy condition
\begin{equation}\label{eq:historical-adequacy}
  \operatorname{Actual}(e)=s
  \quad\Longleftrightarrow\quad
  \exists\tau\;V(q,s,\tau)=1,
  \qquad q=(x,e).
\end{equation}
Thus \(\SELECT(q)\) is the unique continuation historically certified for occurrence \(e\).
The associated selection relation is
\begin{equation}\label{eq:r-select}
  \RSEL(q,s,\tau)=1
  \quad\Longleftrightarrow\quad V(q,s,\tau)=1.
\end{equation}
\end{defi}

The verifier is required to satisfy the following properties on the encoded family.
\begin{enumerate}
  \item \emph{Historical completeness.} For the realised continuation \(s_e\), at least one trace \(\tau_e\) satisfies \(V(q,s_e,\tau_e)=1\).
  \item \emph{Soundness.} If \(V(q,s,\tau)=1\), then \(s\in\GLUE(x)\).
  \item \emph{Fidelity by unique projection.} If both \(V(q,s,\tau)=1\) and \(V(q,s',\tau')=1\), then \(s=s'\).  Several traces may certify one result, but accepted traces cannot certify different results for the same occurrence.
  \item \emph{Retrospective efficiency.} Accepted pairs satisfy \(\lvert s\rvert+\lvert\tau\rvert\leq p(\lvert q\rvert)\) for a fixed polynomial \(p\), and \(V\) runs in polynomial time in the total input length.
\end{enumerate}

The trace must not be a hidden prophecy.  The predicate \(V\) and the identifier \(e\) are fixed independently of the act, but neither \(s_e\) nor a valid trace \(\tau_e\) is included in the prior input \(q\).  The information available at the two stages is therefore
\[
  \text{before: }q=(x,e),
  \qquad
  \text{after: }(x,e,s_e,\tau_e).
\]
Verification uses a relation defined in advance but data made available only after actualisation.  This is the formal asymmetry intended by ``non-anticipable a priori, verifiable a posteriori.''

\begin{prop}[Polynomial verification of singularisation]\label{prop:verification}
Suppose that, for every occurrence input \(q=(x,e)\) in the domain, there is a unique projected result \(s_e\) for which at least one trace \(\tau\) satisfies
\[
  \lvert s_e\rvert+\lvert\tau\rvert\leq p(\lvert q\rvert)
  \quad\text{and}\quad V(q,s_e,\tau)=1,
\]
where \(V\) is polynomial-time decidable, satisfies historical adequacy, and every accepted \(s_e\) belongs to \(\GLUE(x)\).  Then \(\RSEL\) is a polynomially balanced, polynomial-time decidable relation whose accepted witnesses uniquely determine the realised continuation by projection.
\end{prop}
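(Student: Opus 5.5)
The statement is essentially a bookkeeping result, so I will verify its three conclusions directly from the definitions in \autoref{def:trace} and the listed verifier properties.

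\textbf{Polynomial balance.} I take \(\RSEL\) to be the set of triples \((q,(s,\tau))\) with \(V(q,s,\tau)=1\), the witness being the pair \((s,\tau)\) under a standard pairing encoding. By retrospective efficiency, every accepted pair satisfies \(\lvert s\rvert+\lvert\tau\rvert\le p(\lvert q\rvert)\). A pairing encoding adds at most linear overhead, so \(\lvert(s,\tau)\rvert\le c\,p(\lvert q\rvert)+c\) for a constant \(c\), which is still polynomial.

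\textbf{Polynomial-time decidability.} Given \((q,w)\), I first parse \(w\) as \((s,\tau)\), rejecting if it is malformed. I then reject if \(\lvert s\rvert+\lvert\tau\rvert>p(\lvert q\rvert)\). Otherwise I run \(V(q,s,\tau)\). The length check guarantees that the total input to \(V\) has size polynomial in \(\lvert q\rvert\), so \(V\)'s polynomial running time in its total input length yields an overall polynomial bound. The length cutoff also makes balance hold by definition, without relying on a property of \(V\) on overlong inputs.

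\textbf{Totality and unique projection.} Totality follows from the hypothesis: for each \(q\) in the domain there is an \(s_e\) and a \(\tau\) within the bound with \(V(q,s_e,\tau)=1\). Historical completeness together with \eqref{eq:historical-adequacy} identifies this \(s_e\) as \(\operatorname{Actual}(e)=\SELECT(q)\). For unique projection, if \((s,\tau)\) and \((s',\tau')\) are both accepted for \(q\), then adequacy gives \(\operatorname{Actual}(e)=s\) and \(\operatorname{Actual}(e)=s'\). Since \(\operatorname{Actual}\) is a function, or equivalently by the fidelity axiom, \(s=s'=\SELECT(q)\). Soundness then places this value in \(\GLUE(x)\), consistent with \eqref{eq:select-coherence}.

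\textbf{Main obstacle.} The delicate point is not computational but definitional. One must make sure the relation actually used is the length-truncated one, so that balance holds unconditionally. One must also confirm that witnesses accepted after truncation still include the historical one, which is exactly the hypothesis that some trace for \(s_e\) lies within \(p(\lvert q\rvert)\). I would state this explicitly so that the later \(\FNPS\) membership argument can cite it.
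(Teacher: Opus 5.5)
Your proposal is correct and follows essentially the same route as the paper. The paper's proof takes polynomial balance and decidability directly from the standing verifier assumptions, then chains historical completeness, soundness, unique projection, and historical adequacy to identify the common first component with $\SELECT(q)$; your explicit length cutoff and your remark that adequacy alone already forces unique projection (because $\operatorname{Actual}(e)$ is single-valued) are harmless refinements of that argument.
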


\begin{proof}
Polynomial balance and decidability are assumptions.  Historical completeness gives at least one accepted witness for every occurrence in the domain.  Soundness places its first component in \(\GLUE(x)\), and unique projection forces all accepted witnesses to have the same first component.  Historical adequacy identifies that component with \(\operatorname{Actual}(e)=\SELECT(q)\).
\end{proof}

The proposition does not yet imply a class separation.  It provides the verification side; non-membership in FP will come only from the independent non-anticipability clause in \autoref{def:lac}.

\section{A restricted witness: Tseitin formulas on expanders}\label{sec:tseitin}

The abstract architecture benefits from an explicit model in which locally simple constraints create a genuinely global obstruction.  Tseitin formulas provide such a model.  They are used here to establish a restricted proof-complexity statement, not as a universal lower bound for \(\SELECT\).

Let \(G=(V,E)\) be a connected bounded-degree graph.  Associate a Boolean variable \(x_e\in\{0,1\}\) with each edge \(e\), and a charge \(c_v\in\{0,1\}\) with each vertex \(v\).  The local constraint at \(v\) is
\begin{equation}\label{eq:tseitin-local}
  \bigoplus_{e\ni v}x_e=c_v\pmod 2.
\end{equation}
XOR-ing all vertex constraints cancels each edge variable twice and yields the necessary global condition
\begin{equation}\label{eq:tseitin-global}
  0=\bigoplus_{v\in V}c_v.
\end{equation}
If the total charge is odd, every proper local region may look extendible while the full system is inconsistent.  In the language of \autoref{sec:finite-model}, the vertex constraints define local sections but \(\Gamma(F_G)=\varnothing\).

For a vertex set \(U\subseteq V\), summing the constraints inside \(U\) leaves only boundary variables:
\begin{equation}\label{eq:boundary-parity}
  \bigoplus_{e\in\delta(U)}x_e=\bigoplus_{v\in U}c_v,
\end{equation}
where \(\delta(U)\) is the edge boundary of \(U\).  If \(G\) is an edge expander, then for all \(U\) of size at most \(\lvert V\rvert/2\),
\begin{equation}\label{eq:expansion}
  \lvert\delta(U)\rvert\geq h\lvert U\rvert
\end{equation}
for some constant \(h>0\).  Large partial regions therefore retain large interfaces.  This expansion property motivates the local-to-global interpretation, but no new sheaf-width invariant is asserted here.

Urquhart proved that standard CNF encodings of contradictory Tseitin constraints built from a suitable bounded-degree expander family require exponential-length resolution refutations \cite{Urquhart1987}.  Ben-Sasson and Wigderson later established the general width--size theory for resolution and showed how expansion-based width lower bounds yield size lower bounds \cite{BenSassonWigderson2001}.  For the present paper, only the classical exponential size lower bound is needed:
\[
  \operatorname{Size}_{\Res}(\operatorname{Tseitin}(G_n))
  =2^{\Omega(n)}.
\]

\begin{thm}[Restricted resolution obstruction]\label{thm:tseitin-restricted}
There exists a family of finite local-to-global models obtained from contradictory Tseitin formulas on bounded-degree expander graphs such that the associated CNF formulas require exponential-size resolution refutations.
\end{thm}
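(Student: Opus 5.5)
The plan is to reduce the statement to the classical lower bound quoted just above: build the models explicitly, check that they fit the framework of \autoref{sec:finite-model}, and then invoke Urquhart's theorem \cite{Urquhart1987} (or the width--size route of \cite{BenSassonWigderson2001}) for the size bound.

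\emph{Choice of graphs.} I would fix an explicit family \((G_n)_n\) of connected \(d\)-regular graphs on \(n\) vertices, with \(d\) a constant, that are edge expanders in the sense of \eqref{eq:expansion} with a uniform constant \(h>0\). Margulis--Gabber--Galil graphs or random \(d\)-regular graphs would do. On each \(G_n\) I would choose charges \(c_v\) with \(\bigoplus_v c_v=1\), for instance a single charged vertex.

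\emph{Fitting the finite framework.} For the model I take:
\begin{itemize}
  \item variables \(X=E(G_n)\) with Boolean domains;
  \item contexts \(U_v=\{e:e\ni v\}\), one for each vertex \(v\);
  \item relations \(R_v=\{t\in\{0,1\}^{U_v}:\bigoplus_{e\ni v}t_e=c_v\}\).
\end{itemize}
The \(U_v\) cover \(X\) because each edge has an endpoint, so this is a finite choice model. By \autoref{prop:glue-csp}, \(\Gamma(F_{G_n})\) is in bijection with the solutions of the Tseitin system. Summing all vertex constraints gives \eqref{eq:tseitin-global}, and the odd total charge then forces \(\Gamma(F_{G_n})=\varnothing\).

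I would also note that each \(R_v\) is nonempty. This shows the obstruction is genuinely global rather than local.

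\emph{Passing to CNF.} Each \(R_v\) is a parity constraint of arity \(\deg v\le d\). It can therefore be written as a CNF with at most \(2^{d-1}\) clauses of width \(d\), one clause excluding each wrong-parity assignment. This CNF is the standard encoding \(\operatorname{Tseitin}(G_n)\), of total size \(O(n)\).

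\emph{Lower bound.} I would then apply Urquhart's theorem directly: for such expander families every resolution refutation has size \(2^{\Omega(n)}\).

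Alternatively, I would sketch the Ben-Sasson--Wigderson route. Use the measure \(\mu(C)\), defined as the minimal number of vertex axioms needed to imply the clause \(C\). Its subadditivity yields a clause with \(\mu\) between \(n/3\) and \(2n/3\). Expansion via \eqref{eq:boundary-parity} shows that clause has width at least \(hn/3\). Since the initial width is at most \(d\), the width--size relation
\[
  \operatorname{Size}\ge \exp\bigl(\Omega((w-d)^2/|E|)\bigr)
\]
gives \(2^{\Omega(n)}\), because \(|E|=O(n)\).

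\emph{Main obstacle.} The only delicate step is matching hypotheses rather than computation. The encoding must be exactly the one covered by the cited theorem, and the expander family must satisfy the expansion notion the theorem uses. Since the statement only asserts existence, I would use Urquhart's explicit construction as the witness. This avoids re-proving the width lemma.
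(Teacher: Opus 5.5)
Your proposal is correct and follows the paper's proof: take Urquhart's bounded-degree expander family with an odd charge assignment, get unsatisfiability from the global parity condition \eqref{eq:tseitin-global}, and cite the $2^{\Omega(n)}$ resolution lower bound for the standard CNF encoding. Your additional checks (that the vertex contexts form a finite choice model with $\Gamma(F_{G_n})=\varnothing$, and that the CNF encoding has size $O(n)$) and your optional Ben-Sasson--Wigderson width sketch are sound, but they do not change the route.
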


\begin{proof}
Take the bounded-degree expander family in Urquhart's construction and an odd charge assignment.  The corresponding Tseitin formulas are unsatisfiable by \eqref{eq:tseitin-global}, and their standard CNF encodings require resolution refutations of size \(2^{\Omega(n)}\) \cite{Urquhart1987}.
\end{proof}

\begin{rem}\label{rem:tseitin-scope}
\autoref{thm:tseitin-restricted} concerns a specified encoding and proof system.  It does not prove that every representation of the same parity structure is hard, nor does an inconsistent Tseitin instance itself model selection, since it has no global continuation to select.  Satisfiable variants or related gadgets may be used to study plural admissible solutions, but the theorem above is retained only as a quantitative laboratory for local-to-global obstruction.
\end{rem}

\section{Effective changes of representation}\label{sec:representation}

A lower bound tied to one syntax may disappear under a better representation.  Gaussian elimination, for example, exposes parity structure that can be difficult for resolution.  A claim of intrinsic non-anticipability must therefore specify which changes of representation are admissible and account for their computational cost.

An encoding scheme \(\mathcal E\) for a process family specifies a polynomial-time decidable language \(I_{\mathcal E}\) of valid occurrence inputs \(q=(x,e)\), the associated admissible sets \(\GLUE_{\mathcal E}(q)=\GLUE_{\mathcal E}(x)\), and a uniform certified-actualisation relation \(V_{\mathcal E}(q,s,\tau)\).  Semantic equivalence alone is too weak for complexity arguments: an equivalent encoding could contain the answer or require superpolynomial work to obtain.  We therefore compare entire encoding schemes uniformly.

\begin{defi}[Admissible effective equivalence]\label{def:effective-equivalence}
Two encoding schemes \(\mathcal E\) and \(\mathcal E'\) are \emph{effectively equivalent}, written \(\mathcal E\sim_{\eff}\mathcal E'\), when there are uniform polynomial-time translations in both directions for instances and continuations, with polynomially bounded size distortion, such that:
\begin{enumerate}
  \item valid instances are mapped to valid instances and the translations recover the same encoded occurrence up to the chosen coding convention;
  \item the continuation translations give mutually inverse correspondences between \(\GLUE_{\mathcal E}(q)\) and \(\GLUE_{\mathcal E'}(T(q))\);
  \item certified actualisation is preserved: for every valid \(q\) and continuation \(s\),
  \[
    \exists\tau\,V_{\mathcal E}(q,s,\tau)=1
    \quad\Longleftrightarrow\quad
    \exists\tau'\,V_{\mathcal E'}(T(q),U_q(s),\tau')=1,
  \]
  where \(T\) and \(U_q\) are the instance and continuation translations.
\end{enumerate}
\end{defi}

This definition does not use \(\SELECT\) as a primitive invariance requirement.  It preserves the admissible sets and the certified-actualisation relation from which the uniquely projected selected continuation is obtained.  It is stronger than bare semantic equivalence and weaker than claiming a categorical equivalence between arbitrary topoi.

\begin{lem}[Invariance of polynomial computability]\label{lem:effective-invariance}
If \(\mathcal E\sim_{\eff}\mathcal E'\) and the selected continuation has a polynomial-time selector under \(\mathcal E'\), then it has a polynomial-time selector under \(\mathcal E\).
\end{lem}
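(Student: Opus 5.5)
The plan is to build the selector for \(\mathcal E\) by composing translations: map the instance forward with \(T\), run the \(\mathcal E'\)-selector, and map the answer back with the inverse continuation translation. Let \(A'\) be the polynomial-time selector under \(\mathcal E'\). Given a valid \(\mathcal E\)-instance \(q\), compute \(T(q)\), then \(s'=A'(T(q))\), and output \(U_q^{-1}(s')\). Here \(U_q^{-1}\) is the backward continuation translation provided by \autoref{def:effective-equivalence}. Membership \(q\in I_{\mathcal E}\) is polynomial-time decidable, so the algorithm may assume a valid input.

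\textbf{Running time.} The translation \(T\) runs in polynomial time and \(\lvert T(q)\rvert\le r(\lvert q\rvert)\) for a polynomial \(r\), by the bounded size distortion. Hence \(A'\) runs in time polynomial in \(\lvert q\rvert\), and its output \(s'\) has polynomial length. The backward continuation translation is uniform polynomial-time in \((q,s')\). Composing polynomials gives a polynomial bound, so the procedure is a uniform polynomial-time algorithm on the whole encoded family.

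\textbf{Correctness.} The \(\mathcal E\)-selected continuation is the unique \(s\) with \(\exists\tau\,V_{\mathcal E}(q,s,\tau)=1\), by historical completeness and unique projection (\autoref{prop:verification}). Similarly, \(s'=A'(T(q))\) is the unique element certified under \(V_{\mathcal E'}\) at \(T(q)\).
\begin{enumerate}
\item By clause 2 of the definition, \(s'\in\GLUE_{\mathcal E'}(T(q))\) equals \(U_q(s)\) for exactly one \(s\in\GLUE_{\mathcal E}(q)\), namely \(s=U_q^{-1}(s')\).
\item Clause 3 then gives \(\exists\tau\,V_{\mathcal E}(q,s,\tau)=1\), because \(\exists\tau'\,V_{\mathcal E'}(T(q),U_q(s),\tau')=1\).
\item Unique projection under \(\mathcal E\) forces \(s=\SELECT_{\mathcal E}(q)\).
\end{enumerate}
Clause 1 is needed so that \(T(q)\) is a valid \(\mathcal E'\)-instance in the domain where \(A'\) is guaranteed correct.

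\textbf{Main obstacle.} The delicate step is showing that the \(\mathcal E'\) side also has a well-defined selected continuation, that is, that the trace axioms hold for \(V_{\mathcal E'}\) at \(T(q)\). I would handle this by transport through clause 3. The \(\mathcal E\)-realised value has a certificate, so \(U_q(s_e)\) is certified under \(\mathcal E'\). Any other \(\mathcal E'\)-certified continuation lies in \(\GLUE_{\mathcal E'}(T(q))\) by soundness, so it pulls back through the bijection to an \(\mathcal E\)-certified continuation. That pullback must equal \(s_e\) by unique projection under \(\mathcal E\). Therefore the certified value at \(T(q)\) is unique and equals \(U_q(s_e)\), and this is what \(A'\) computes.

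A secondary point is that \(U_q\) depends on \(q\). I would read uniformity in \autoref{def:effective-equivalence} as saying that the map \((q,s')\mapsto U_q^{-1}(s')\) is computed by a single polynomial-time machine, which is exactly what the composition above requires.
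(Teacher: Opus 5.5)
Your proposal is correct and follows the paper's proof. Like the paper, you translate with \(T\), run the \(\mathcal E'\)-selector, map back with the inverse continuation translation, and use preservation of certified actualisation (clause~3) together with unique projection under \(\mathcal E\) to identify the output as the \(\mathcal E\)-selected continuation. Your running-time bookkeeping and your transport argument (that the \(\mathcal E'\)-certified value at \(T(q)\) is exactly \(U_q(s_e)\), which relies on soundness of \(V_{\mathcal E'}\)) only make explicit what the paper's three-line proof leaves implicit.
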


\begin{proof}
Translate an \(\mathcal E\)-instance \(q\) to the \(\mathcal E'\)-instance \(T(q)\), run the assumed selector, and apply the inverse continuation translation.  All three operations are uniform and polynomial-time.  Preservation of certified actualisation and unique projection ensure that the resulting continuation is the one selected for \(q\).
\end{proof}

Consequently, non-computability in polynomial time is preserved under admissible effective equivalence by contraposition.  This gives the robust form of condition (A):
\begin{equation}\label{eq:astar}
\begin{gathered}
  \text{for every }\mathcal E'\sim_{\eff}\mathcal E,\text{ no uniform
  polynomial-time selector under }\mathcal E'\\
  \text{computes the certified continuation}
\end{gathered}
\end{equation}
on the entire encoded family.

No representation-independent width is inferred from this example.  The general argument relies directly on condition (A) and \autoref{lem:effective-invariance}, while \autoref{thm:tseitin-restricted} retains its stated encoding- and proof-system-specific scope.

\section{The search relation and the conditional separation}\label{sec:main-result}

The final step must avoid replacing a physical event by an undefined mathematical oracle.  We therefore formulate the selected continuation through the relation already introduced in \eqref{eq:r-select}.

We use explicit search-problem terminology.  A polynomially balanced relation whose membership predicate is decidable in polynomial time defines a problem in \(\FNPS\).  It belongs to \(\FPS\) when it has a deterministic polynomial-time selector that returns an accepted output on every input in its domain.  TFNP denotes the total relations in \(\FNPS\).  The unqualified class FP is reserved below for ordinary single-valued functions such as \(\SELECT\).  This convention avoids identifying the possibly multivalued certificate \((s,\tau)\) with its uniquely projected value \(s\).

\subsection{The induced search problem}

Let \(I_{\LAc}\subseteq\{0,1\}^*\) be the language of valid encoded occurrence inputs \(q=(x,e)\) of the process family.  We require \(I_{\LAc}\) to be infinite and decidable in polynomial time.  Assume that \(q,s,\tau\) have finite binary encodings and that there is a polynomial \(p\) such that every relevant accepted pair satisfies
\[
  \lvert s\rvert+\lvert\tau\rvert\leq p(\lvert q\rvert).
\]
Assume further that \(\RSEL(q,s,\tau)\) is polynomial-time decidable, historically complete on the domain, sound with respect to \(\GLUE(x)\), and uniquely projected onto \(s\).  Define
\begin{equation}\label{eq:search-relation}
  R(q,y)=1
  \quad\Longleftrightarrow\quad
  y=(s,\tau)\text{ and }\RSEL(q,s,\tau)=1.
\end{equation}
The associated search problem \(\SEARCHSEL\) asks, on input \(q\), for an accepted output \(y=(s,\tau)\).  Under the preceding assumptions, it belongs to \(\FNPS\).

\(\FNPS\) does not require uniqueness of the entire witness.  Several traces may certify one actualisation.  Unique projection onto \(s\), however, is indispensable: it ensures that a solver cannot return an arbitrary member of \(\GLUE(x)\) while satisfying a different certificate.

To avoid an implicit promise problem, extend \(R\) to a relation \(\widehat R\) on all binary inputs by
\begin{equation}\label{eq:total-search-relation}
  \widehat R(q,y)=1
  \quad\Longleftrightarrow\quad
  \begin{cases}
    R(q,y)=1, & q\in I_{\LAc},\\
    y=\bot, & q\notin I_{\LAc},
  \end{cases}
\end{equation}
where \(\bot\) is a fixed one-symbol output.  Because membership in \(I_{\LAc}\) and the relation \(R\) are polynomial-time decidable, \(\widehat R\) is polynomial-time decidable and polynomially balanced.  It is total on \(\{0,1\}^*\): valid instances have an accepted trace, while invalid encodings have the unique default output \(\bot\).  Thus the extended search problem belongs, more precisely, to the total NP search class TFNP and hence to FNP.  On valid instances it coincides with \(\SEARCHSEL\).

\begin{lem}[Functional projection]\label{lem:functional-projection}
Suppose that \(R\) is total on the domain, polynomially balanced, and polynomial-time decidable, and that
\[
  R(q,(s,\tau))=R(q,(s',\tau'))=1
  \quad\Longrightarrow\quad s=s'.
\]
Every algorithm that solves \(\SEARCHSEL\) computes \(\SELECT(q)\) by projecting its output onto the first component.  In particular,
\[
  \SEARCHSEL\in\FPS
  \quad\Longrightarrow\quad
  \SELECT\in\mathrm{FP}.
\]
\end{lem}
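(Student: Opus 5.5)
The argument is a direct unwinding of the definitions, organised in three steps.

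\emph{Step 1 (the accepted set has a single first component).} Fix a valid occurrence input \(q=(x,e)\in I_{\LAc}\). Totality of \(R\) on the domain gives at least one accepted output \(y=(s,\tau)\). The hypothesis in the statement says that any two accepted outputs have equal first components. Hence the set
\[
  \{s:\exists\tau\;R(q,(s,\tau))=1\}
\]
is a singleton.

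\emph{Step 2 (that component is \(\SELECT(q)\)).} By \eqref{eq:search-relation} and \eqref{eq:r-select}, \(R(q,(s,\tau))=1\) is equivalent to \(V(q,s,\tau)=1\). Historical adequacy \eqref{eq:historical-adequacy} then gives \(\operatorname{Actual}(e)=s\), that is, \(s=\SELECT(q)\). As a consistency check, historical completeness independently produces a trace \(\tau_e\) with \(V(q,s_e,\tau_e)=1\), so this singleton is \(\{s_e\}\). This step is the same reasoning as in the proof of \autoref{prop:verification}, which I will simply invoke.

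\emph{Step 3 (projection gives a selector).} Let \(A\) be any algorithm solving \(\SEARCHSEL\), so that on every \(q\in I_{\LAc}\) it outputs some \(y\) with \(R(q,y)=1\). By Steps 1 and 2, \(y=(s,\tau)\) with \(s=\SELECT(q)\). The map \(q\mapsto\pi_1(A(q))\) therefore computes \(\SELECT\). If \(A\) runs in deterministic polynomial time, then so does this map. Parsing the pair and extracting the first component is linear in \(\lvert y\rvert\), and \(\lvert y\rvert\le p(\lvert q\rvert)\) by polynomial balance (in any case \(\lvert y\rvert\) is bounded by \(A\)'s running time). Hence \(\SEARCHSEL\in\FPS\) implies \(\SELECT\in\mathrm{FP}\) on the domain. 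If \(\SELECT\) is regarded as defined on all strings via the extension \(\widehat R\) of \eqref{eq:total-search-relation}, first test \(q\in I_{\LAc}\) in polynomial time and output \(\bot\) otherwise.

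The only delicate point is Step 2: identifying the uniquely projected value with the historically realised one, rather than with an arbitrary member of \(\GLUE(x)\). This is exactly where unique projection and historical adequacy are both needed. Soundness alone would give only \(s\in\GLUE(x)\), which by \autoref{lem:glue-select} does not determine \(\SELECT(q)\).
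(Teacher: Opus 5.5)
Your proof is correct and is essentially the paper's argument: a solver returns an accepted pair \((s,\tau)\), unique projection together with historical adequacy identifies \(s\) with \(\SELECT(q)\), and extracting the first component takes time linear in the output length. Your bound on the output length and your handling of inputs outside \(I_{\LAc}\) make explicit details the paper leaves implicit, without changing the route.
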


\begin{proof}
A solver must return an accepted pair \((s,\tau)\).  By unique projection and historical adequacy, \(s\) is the unique historically certified value for \(q\), namely \(\SELECT(q)\).  Extracting the first component is linear in the output length.
\end{proof}

\subsection{Main theorem}

\begin{thm}[Conditional functional separation]\label{thm:main}
Suppose that there exists an infinite, uniformly encoded family of processes satisfying \(\LAc\) such that:
\begin{enumerate}
  \item its language \(I_{\LAc}\) of valid encodings is polynomial-time decidable;
  \item every accepted continuation--trace pair is polynomially bounded in the occurrence-input length;
  \item the fixed relation \(\RSEL\) is polynomial-time decidable, historically complete on the family, sound, and uniquely projected onto the realised continuation;
  \item effective non-anticipability holds relative to the uniform polynomial-time model, in the representation-robust sense of \eqref{eq:astar}.
\end{enumerate}
Then there exists a total FNP search problem that is not solvable in deterministic polynomial time.  Consequently,
\[
  \LAc\quad\Longrightarrow\quad
  \FPS\neq\FNPS.
\]
\end{thm}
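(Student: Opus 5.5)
The argument proceeds by contradiction, using the extended relation \(\widehat R\) of \eqref{eq:total-search-relation} as the witness problem. First I would check that \(\widehat R\) is a total FNP search relation. Polynomial balance comes from hypothesis~2 together with the one-symbol default \(\bot\). Polynomial-time decidability follows from hypotheses~1 and~3: on input \((q,y)\), first decide whether \(q\in I_{\LAc}\). If it is, parse \(y\) as \((s,\tau)\) and run the decider for \(\RSEL\); if it is not, test whether \(y=\bot\). Totality follows from historical completeness on valid inputs and from \(\bot\) on invalid ones. This places \(\widehat R\) in TFNP and hence in \(\FNPS\), exactly as recorded before \autoref{lem:functional-projection}.

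Next, suppose toward a contradiction that some deterministic polynomial-time algorithm \(M\) solves \(\widehat R\). Restricted to \(q\in I_{\LAc}\), \(M\) outputs some \(y=(s,\tau)\) with \(\RSEL(q,s,\tau)=1\), so \(M\) solves \(\SEARCHSEL\) on the family. The hypotheses of \autoref{lem:functional-projection} hold: totality, balance and decidability come from the previous step, and the implication \(s=s'\) is the unique-projection clause of hypothesis~3. That lemma, combined with historical adequacy \eqref{eq:historical-adequacy}, shows that the first component of \(M(q)\) equals \(\operatorname{Actual}(e)=\SELECT(q)\). Composing \(M\) with the linear-time projection therefore gives a uniform polynomial-time procedure computing \(\SELECT(q)\) from \(q=(x,e)\) on the entire encoded family.

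This procedure works under the original encoding \(\mathcal E\), and \(\mathcal E\sim_{\eff}\mathcal E\) via the identity translations. So it contradicts the representation-robust non-anticipability \eqref{eq:astar} (hypothesis~4), and also the plain condition (A) of \autoref{def:lac}. \autoref{lem:effective-invariance} is not needed in this direction, since the selector already lives in \(\mathcal E\). Hence \(\widehat R\) has no deterministic polynomial-time selector, so \(\widehat R\in\FNPS\setminus\FPS\), which gives \(\FPS\neq\FNPS\).

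The main obstacle is the identification step. \autoref{lem:functional-projection} only yields \(\SELECT\in\mathrm{FP}\) if \emph{every} accepted certificate projects onto the realised value. A solver could in principle return any accepted \((s,\tau)\), so soundness alone is insufficient: it only puts \(s\) in \(\GLUE(x)\), and \(\lvert\GLUE(x)\rvert\geq2\) by condition (P). I would therefore spell out carefully that unique projection plus historical completeness forces the set of projected first components to be the singleton \(\{s_e\}\), and that adequacy names \(s_e=\SELECT(q)\). A secondary point to verify is that "uniform polynomial-time procedure on the entire encoded family" in (A) matches "deterministic polynomial-time selector on all of \(I_{\LAc}\)". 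This requires \(I_{\LAc}\) to be infinite, so that the non-anticipability clause is not vacuous.
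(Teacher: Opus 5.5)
Your proposal is correct and follows the paper's proof essentially step for step. You verify that \(\widehat R\) is polynomially balanced, polynomial-time decidable and total (hence in TFNP), then turn a hypothetical deterministic polynomial-time solver into a polynomial-time computation of \(\SELECT\) via \autoref{lem:functional-projection}, contradicting assumption (4); your remark that the identity translation witnesses \(\mathcal E\sim_{\eff}\mathcal E\), so that \eqref{eq:astar} applies without \autoref{lem:effective-invariance}, makes explicit a step the paper leaves implicit.
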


\begin{proof}
By assumptions (1)--(3), the extended relation \(\widehat R\) in \eqref{eq:total-search-relation} is polynomially balanced, polynomial-time decidable, and total on all binary inputs.  Hence its search problem belongs to TFNP and therefore to FNP.

Suppose, for contradiction, that this search problem has a polynomial-time deterministic solver.  On every valid \(q\in I_{\LAc}\), that solver must return an accepted pair \((s,\tau)\), rather than \(\bot\).  By \autoref{lem:functional-projection}, projecting this output computes \(\SELECT(q)\) in polynomial time on the encoded family.  This contradicts assumption (4).

Therefore the extended problem lies in \(\mathrm{TFNP}\setminus\FPS\).  In particular, \(\FPS\neq\FNPS\).
\end{proof}

\begin{cor}[Conditional decision-class separation]\label{cor:p-np}
Under the assumptions of \autoref{thm:main},
\[
  \mathrm{P}\neq\mathrm{NP}.
\]
\end{cor}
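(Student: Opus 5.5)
The plan is to derive the corollary from \autoref{thm:main} through the standard search-to-decision self-reducibility argument, in contrapositive form. I will assume \(\mathrm{P}=\mathrm{NP}\) and show that every search problem in \(\FNPS\) then lies in \(\FPS\). Applied to the total extended relation \(\widehat R\) of \eqref{eq:total-search-relation}, this contradicts the conclusion of \autoref{thm:main}, namely that \(\widehat R\) lies in \(\mathrm{TFNP}\setminus\FPS\).

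The key construction is a prefix language attached to \(\widehat R\). Let \(p'\) be the balance polynomial of \(\widehat R\), so that accepted outputs satisfy \(\lvert y\rvert\le p'(\lvert q\rvert)\). Define
\[
  L_{\mathrm{pre}}=\bigl\{(q,w):\exists y\;\bigl(\lvert y\rvert\le p'(\lvert q\rvert),\ w\text{ is a prefix of }y,\ \widehat R(q,y)=1\bigr)\bigr\}.
\]
Membership in \(L_{\mathrm{pre}}\) is certified by the extension \(y\), and checking that certificate takes polynomial time because \(\widehat R\) is polynomial-time decidable by assumptions (1)--(3). Hence \(L_{\mathrm{pre}}\in\mathrm{NP}\). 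Under \(\mathrm{P}=\mathrm{NP}\), let \(D\) be a deterministic polynomial-time decider for \(L_{\mathrm{pre}}\).

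On input \(q\), the solver proceeds as follows.
\begin{enumerate}
  \item If \(q\notin I_{\LAc}\), it outputs \(\bot\); this test is polynomial-time by assumption (1).
  \item Otherwise it starts from the empty prefix \(w\). At each stage it first tests whether \(\widehat R(q,w)=1\), and halts with output \(w\) if so.
  \item If not, it extends \(w\) by a symbol \(b\) chosen so that \(D(q,wb)\) accepts. A fixed self-delimiting binary coding of pairs \((s,\tau)\) keeps the alphabet finite.
\end{enumerate}
The invariant is that \((q,w)\in L_{\mathrm{pre}}\) at every stage. It holds initially because \(\widehat R\) is total, so some accepted \(y\) exists. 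It is preserved because a proper prefix of an accepted witness always has some one-symbol extension that is still a prefix of that witness. The loop therefore stops after at most \(p'(\lvert q\rvert)\) rounds and returns an accepted output. This places \(\widehat R\) in \(\FPS\), which contradicts \autoref{thm:main}.

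The main obstacle is the bookkeeping around encodings, since the argument needs witnesses to be prefix-extendable strings over a fixed alphabet. I would handle this by fixing a self-delimiting coding of pairs, so that the pair structure in \eqref{eq:search-relation} causes no ambiguity, and by checking acceptance at every stage rather than only at the length bound. An alternative would be to appeal directly to the standard equivalence \(\FPS=\FNPS\Leftrightarrow\mathrm{P}=\mathrm{NP}\) \cite{Papadimitriou1994,AroraBarak2009}, which is the same argument cited.
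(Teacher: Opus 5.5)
Your proposal is correct and follows essentially the same route as the paper. Assuming $\mathrm{P}=\mathrm{NP}$, prefix search over an NP prefix language yields a deterministic polynomial-time solver for the polynomially balanced relation, contradicting \autoref{thm:main}; the paper states this only in the general form $\mathrm{P}=\mathrm{NP}\Rightarrow\FNPS=\FPS$ with a citation, whereas you carry out the self-reduction explicitly for $\widehat R$ and use its totality to establish the invariant at the start.
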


\begin{proof}
If \(\mathrm{P}=\mathrm{NP}\), every polynomially balanced NP search relation has a deterministic polynomial-time selector: a witness can be reconstructed by prefix search using polynomially many decision queries \cite{Papadimitriou1994,AroraBarak2009}.  Thus \(\mathrm{P}=\mathrm{NP}\) would imply \(\FNPS=\FPS\).  The contrapositive, together with \autoref{thm:main}, yields the claim.
\end{proof}

\subsection{Exact scope of the result}

The theorem does not establish the existence of human, artificial, or abstract free will.  It does not give an unconditional separation of P and NP.  It proves an implication: if an infinite uniform family of processes satisfies the mathematical definition of \(\LAc\), including the explicit trace and non-anticipability requirements, then polynomially verifiable search differs from polynomially solvable search and hence P differs from NP.

The result is also not intended to hide the separation inside terminology.  The definition supplies a structure beyond the bare existence of a polynomially verifiable search relation with no polynomial-time selector: locally constrained admissibility, a non-singleton globalisation set, historical singularisation, temporal information asymmetry, a uniform retrospective trace, and robustness under effective representation changes.  Nevertheless, condition (A) is a strong computational premise.  The value of \autoref{thm:main} depends on defending that premise as an independently meaningful property of choice processes and, ultimately, on exhibiting a family that satisfies all clauses without presupposing the conclusion.

The roles of the ingredients may be summarised as
\begin{align*}
  \text{topos/sheaf viewpoint}&\longrightarrow \text{finite choice model}
  \longrightarrow \operatorname{CSP}/\GLUE,\\
  \GLUE&\longrightarrow \SELECT
  \longrightarrow \text{trace relation in }\FNPS,\\
  \text{effective non-anticipability}&\longrightarrow \SELECT\notin\mathrm{FP},\\
  \text{functional projection}&\longrightarrow \FPS\neq\FNPS
  \longrightarrow \mathrm{P}\neq\mathrm{NP}.
\end{align*}

\section{Conclusion}\label{sec:conclusion}

We have formalised computational free will as a local-to-global choice structure in which prior determinations constrain several admissible global continuations without supplying an efficient uniform prediction of the one that will be realised.  The distinction between \(\GLUE\) and \(\SELECT\) separates solution admissibility from historical singularisation.  A fixed trace relation makes the realised continuation efficiently certifiable after actualisation, and unique projection makes the resulting search problem functional at the level of the selected value.

Tseitin formulas on expanders show, in a restricted and explicit setting, how simple local constraints can generate a hard global obstruction.  Effective equivalence prevents this illustration from being mistaken for a representation-independent lower bound.  The general theorem instead isolates the precise conditional bridge: existence of a uniformly encoded \(\LAc\) family with polynomial retrospective certification and polynomial-time non-anticipability yields a total FNP search relation with no deterministic polynomial-time selector.

The remaining mathematical frontier is therefore an existence and adequacy problem.  One must determine whether a non-artificial family satisfies all six clauses of \(\LAc\), whether the trace relation can be specified without smuggling the selected value into the prior input, and whether non-anticipability can be established independently of the class separation it entails.  These questions delimit a programme connecting sheaf theory, search complexity, information flow, cognition, and the formal analysis of choice.

\bibliographystyle{alphaurl}
\bibliography{references}

\end{document}